
\pdfoutput=1
\documentclass[letterpaper, 10 pt, conference]{ieeeconf}  

\IEEEoverridecommandlockouts                              

\overrideIEEEmargins                                      

\pdfminorversion=4



\usepackage{graphicx}
\usepackage{dsfont}
\usepackage{amsmath,bm,amssymb}
\usepackage{graphics} 
\usepackage{comment}
\usepackage{epsfig} 
\usepackage{array}
\usepackage{subcaption}
\usepackage{comment}
\usepackage{multicol}
\usepackage{multirow}
\usepackage{color}
\usepackage{subcaption}
\usepackage{bbm}
\usepackage{mathrsfs}
\usepackage{xparse}
\usepackage{algorithm}
\usepackage{algpseudocode}
\newtheorem{theorem}{Theorem}
\newtheorem{lemma}{Lemma}

\usepackage{mathtools}

\usepackage[colorinlistoftodos]{todonotes}

\usepackage{etoolbox}
\newtoggle{comments}
\toggletrue{comments}

\makeatletter
\newcommand{\algmargin}{\the\ALG@thistlm}
\makeatother
\newlength{\whilewidth}
\settowidth{\whilewidth}{\algorithmicwhile\ }
\algdef{SE}[parWHILE]{parWhile}{EndparWhile}[1]
{\parbox[t]{\dimexpr\linewidth-\algmargin}{%
		\hangindent\whilewidth\strut\algorithmicwhile\ #1\ \algorithmicdo\strut}}{\algorithmicend\ \algorithmicwhile}%
\algnewcommand{\parState}[1]{\State%
	\parbox[t]{\dimexpr\linewidth-\algmargin}{\strut #1\strut}}

\iftoggle{comments}{
	\newcommand {\pli}[1]{{\color{blue}{~Pan: #1}\normalfont}}
	\newcommand {\alberto}[1]{{\color{orange}{~Alberto: #1}\normalfont}}
	\newcommand {\bjin}[1]{{\color{violet}{~Baihong: #1}\normalfont}}
	\newcommand {\rxiong}[1]{{\color{violet}{~Ruoxuan: #1}\normalfont}}
}{
	\newcommand {\pli}[1]{{}}
	\newcommand {\bjin}[1]{{}}
	\newcommand {\alberto}[1]{{}}
	\newcommand {\rxiong}[1]{{}}
}

\title{\LARGE \bf
A tractable ellipsoidal approximation for voltage regulation problems
}

\author{Pan Li$^{1}$, Baihong Jin$^{2}$, Ruoxuan Xiong$^{3}$, Dai Wang$^{4}$, Alberto Sangiovanni-Vincentelli$^{2}$ and Baosen Zhang$^{5}$
\thanks{\textcolor{black}{This work is supported in part  by NSF grant 1544160, and in part by the National Research Foundation of Singapore through a grant to the Berkeley Education Alliance for Research in Singapore (BEARS) for the Singapore-Berkeley Building Efficiency and Sustainability in the Tropics (SinBerBEST) program.}}
\thanks{$^{1}$Pan Li is with Facebook Inc., 1 Hacker Way, Menlo Park, CA 94025. This work was done when she was affiliated with University of Washington, Seatte, WA 98195
        {\tt\small lipan.uw@gmail.com}}%
\thanks{$^{2}$Baihong Jin and Alberto Sangiovanni-Vincentelli are with the Department of Electrical Engineering and Computer Science, University of California, Berkeley,
        CA 94720
        {\tt\small \{bjin,alberto\}@berkeley.edu}}%
\thanks{$^{3}$Ruoxuan Xiong is with the Department of Management Science and Engineering, Stanford University, Stanford, CA 94305
    	{\tt\small rxiong@stanford.edu}}
\thanks{$^{4}$Dai Wang is with Tesla Inc., 3500 Deer Creek Rd., Palo Alto, CA 94304
	{\tt\small daiwang@tesla.com}}
\thanks{$^{5}$ Baosen Zhang are with the Department of Electrical Engineering,
	University of Washington, Seatte, WA 98195
	{\tt\small  zhangbao@uw.edu}}
}

\begin{document}

\maketitle
\thispagestyle{empty}
\pagestyle{empty}

\begin{abstract}

We present a machine learning approach to the solution of chance constrained optimizations in the context of voltage regulation problems in power system operation. The novelty of our approach resides in approximating the feasible region of uncertainty with an ellipsoid. We formulate this problem using a learning model similar to Support Vector Machines (SVM) and propose a sampling algorithm that efficiently trains the model. We demonstrate our approach on a voltage regulation problem using standard IEEE distribution test feeders.

\end{abstract}

\section{INTRODUCTION}\label{sec:intro}

Voltage regulation is crucial to maintain voltage stability in distribution systems under different operating conditions~\cite{LietAl2014}. Traditionally, voltage is regulated by changing reactive power through tap-changing transformers and switched capacitors~\cite{ZhangetAl2013}. With the advances of distributed energy resources (DERs, i.e., electric vehicles~\cite{WangEtAl2016}, PV panels~\cite{KanchevEtAl2011,Zhang2015})
reactive power support to regulate voltage may be available. These control strategies have been analyzed in literature by either centralized schema~\cite{FarivarEtAl2012pes,ValvEtAl2013} or distributed algorithms~\cite{ZhuEtAl2016,LietAl2014, ZhangetAl2013,SulcEtAl2016}.

DERs bring significant uncertainty and fast variations in real power to the distribution system~\cite{TuritsynetAl2011,Zhang2015,Robbins13,jin2015contract}. Since most distribution systems do not yet have real-time communication capabilities, a decision made must be valid for a set of possible conditions. For example, in a system with solar PVs, the buses communicate with the feeder (or some other coordinators) every 5 minutes to receive a command for setting its reactive power, but the changes in solar irradiation result in sub-minute timescales changes in its active power. In this paper, we consider a centralized control framework, where periodic control signals are designed to regulate voltages for the entirety of the period in the presence of randomness.

A natural framework to handle the uncertainties introduced by the fast variation in the real power output of the DERs is chance constraints. Indeed, chance constraints bound the probability of voltage constraint violations~\cite{HajianEtAl2012}. Chance constraints may yield difficult optimization problems, since algorithms may have to be designed on a case-by-case basis. Authors in~\cite{YuryEtAl2016, ZhangEtAl2011,WuEtAl2014} offer different relaxation techniques to break chance constraints into simpler ones and formulate stochastic optimizations in power systems as conic problems. However, this relaxation can be very conservative and may not apply to large-scale problems~\cite{BakerEtAl2017}. In~\cite{MartinezEtAl2014}, the authors probe the boundary of the feasible region by the so-called ``$p$-efficient points''. However, the proposed procedure requires solving a mixed integer programming at each iteration. In~\cite{LiEtAl2017}, the authors adopt a gradient-like algorithm that efficiently finds a sub-optimal solution. 

However, in all of these papers, the authors focus on solving the optimization problem, instead of explicitly characterizing the feasible region due to uncertainty. We note that it is actually an important topic in some other engineering areas to find a good approximation of the feasible region. In integrated circuit design, the problem of design centering~\cite{harwood2017solve,director1977simplicial}, i.e., finding good design parameters inside the feasible region, is crucial in ensuring a good yield in the presence of statistical uncertainties. 
There are two potential benefits of having an approximation of the feasible region: first, the grid operator can evaluate how safe or crucial the current operation since there is a visualization of the region; second, the algorithm does not need to be re-run when the objective functions or any of the additional constraints change (i.e., different configuration or requirement of the system).

In this paper, we provide an explicit ellipsoidal approximation of the feasible region bounded by the chance constraints of interest. We choose ellipsoids due to their convex geometry and tractability when incorporated in convex objective functions~\cite{StephenEtAl2004}.
In addition, we provide a sampling algorithm that efficiently trains the proposed model. Compared to random sampling strategy, our algorithm achieves smaller estimation error with comparable number of samples.

Our contributions to ellipsoidal approximation in chance constrained optimization problems are:
\begin{itemize}
	\item Our proposed method is \emph{data driven}: we do not assume any prior knowledge on the probabilistic distribution of the uncertainty in the system.
	\item We provide a novel view on approximating chance constrained optimizations by applying a \emph{machine learning approach}. 
	\item We present an efficient \emph{training algorithm that achieves a small estimation error}.
\end{itemize}

The rest of the paper is organized as follows. In Section \ref{subsec:literature}, we review the literature on ellipsoidal approximation of feasibility sets and introduce our proposed method. In Section \ref{sec:prob}, we present the mathematical formulation of voltage control in power distribution systems. In Section \ref{sec:learning} we find an approximation of the feasible region by a machine learning approach, and in Section \ref{sec:querying} we show how this learning model can be efficiently trained using active sampling techniques. Simulation results based on the IEEE standard bus system are shown in Section \ref{sec:sim}. Conclusions are presented in Section \ref{sec:conclusion}.

\subsection{Prior results}\label{subsec:literature}
The estimation and approximation of feasibility sets using simple geometric structures have been addressed in many applications.     

Some early work focuses on linear approximation of such sets, for example, simplices were used to approximate the feasible region in integrated circuit design~\cite{director1977simplicial}, and the authors of~\cite{SapatnekarEtAl1993} propose to use supporting hyperplanes for a similar purpose.
More recently, in~\cite{ZhouEtAl2015}, the authors propose to approximate the region of interest by convex piecewise linear machine. In~\cite{JinEtAl2017}, the authors use polytopes to estimate feasible sets of a model predictive control problem.
A problem of these approaches is that linear approximation may end up with many piecewise segments, leading to exponentially many constraints. 

An alternative to linear approximation is to use more versatile and smooth functions, for example a polynomial function, among which ellipsoids are commonly adopted.
In~\cite{AnstreicherEtAl1999}, the authors propose to approximate convex sets by ellipsoids based on a modification of the volumetric cutting plane method. Authors in~\cite{KiselevEtAl2008} evaluate the optimality of ellipsoidal approximation by moments of the support function. This approach unfortunately requires the solution of a high dimensional integral. In~\cite{DabbeneEtAl2003}, the authors use ellipsoids to approximate polytopes. Ellipsoidal approximation is also widely used in control problems to describe the attraction domain~\cite{PolyakEtAl2009} and to approximated polyhedral sets in power systems~\cite{SaricEtAl2008}.

However, either the metrics are hard to compute in a high dimensional setting, or the close form of the target convex set is required in previous papers on ellipsoidal approximation. Instead, we consider a \emph{data-driven} approach and propose an efficient sampling algorithm that finds the boundary of an unknown feasibility region. 
By doing so, we do not require any specific knowledge of the geometry on the feasible region, except that it is a bounded set and it is easy to check whether a point is inside the set or not. 

More specifically, we use a support vector machine (SVM) approach to identify the boundary of the feasibility set by an ellipsoid.
Since we assume that checking whether a point is feasible or not is relatively inexpensive, we seek to collect favorable sample points and update the learned boundary using those queried points.

Our approach stems from the field of \emph{active learning}~\cite{TongEtAl2001}. Unlike passive learning where samples are collected regardless of the machine learning model itself, in active learning, the samples are selected and queried from the oracle manually to optimally train a specific model. The goal is to query a small number of samples to achieve similar accuracy as passive learning. Active Learning is a rich field with many results obtained over the years, we recommend the interested reader to access ~\cite{BalcanEtAl2009,DasguptaEtAl2006,BalcanEtAl2007} and the references within, to see the label (query) complexity of different sampling methods for either the consistent or the agnostic case. 

\section{Problem Set-up}\label{sec:prob}

\subsection{Power flow model for radial networks}
We first present the modeling of components in a radial distribution network in power systems. For interested readers, please refer to \cite{FarivarEtAl2012, GanetAl2012} for more details. We specifically consider a linear approximation of the system, which is presented in~\cite{Wu89}, assuming that the losses are negligible and the voltage at each bus is close to 1 in p.u. \cite{ZhuEtAl2016}. 
We consider a distribution network with $d + 1$ buses ordered as $ 0, 1, \cdots , d$, where bus $0$ is the feeder (reference bus). The linearized relationship between voltage and power injection of the system is given by:
\begin{equation}\label{system}
	\bm{v} = \mathbf{R}\bm{p} + \mathbf{X}\bm{q} + v_0\cdot\bm{1}
\end{equation}
where $\bm{1}$ is an all-one vector, $v_0$ is the nominal voltage value at reference bus $0$, $\bm{v}, \bm{q}, \bm{p} \in \mathcal{R}^{d}$ represent respectively the voltage, reactive power injection and real power injection at each of the buses. The matrices $\mathbf{R}$ and $\mathbf{X}$ are associated with the resistance and reactance of the distribution system \cite{LietAl2014}. 

\subsection{Voltage control in power distribution systems}

To facilitate analysis, rewrite $\bm{v}$ as the difference between the bus voltage and the reference voltage $v_0$, then \eqref{system} becomes:
\begin{equation}
	\bm{v} = \mathbf{R}\bm{p} + \mathbf{X}\bm{q}.
\end{equation}

Installment of DER at each of the buses produces uncertainty in power injection, resulting in uncertainty in voltages. The voltage profile is reformulated into the following form:
\begin{equation}\label{noisyV}
	\begin{aligned}
		\bm{v} & = \mathbf{R}(\bm{p} + \Delta \bm{p}) + \mathbf{X}\bm{q} \\
		& = \mathbf{R}\bm{p} + \mathbf{X}\bm{q} + \mathbf{R}\Delta \bm{p} \\
		& = \mathbf{R}\bm{p} + \mathbf{X}\bm{q} + \bm{\epsilon},
	\end{aligned}
\end{equation}
where $\Delta \bm{p}$ is the uncertain power injection due to DER at each bus, and $\bm{\epsilon}$ {represents the uncertainty of the system} that has covariance matrix $\bm{\Sigma}$. {We assume} that $\bm{\epsilon}$ has a continuous density distribution. The covariance matrix $\bm{\Sigma}$ is not a diagonal matrix, even when $\Delta \bm{p}$ has i.i.d. distribution. 

The voltages in the system should be maintained within a tight bound usually plus/minus 5\% of nominal. With uncertainties introduced by the operation of the DERs, we model this constraint in a probabilistic fashion. We use the following chance constraint which bounds the probability of the voltages staying in the prescribed bounds:
\begin{equation}\label{max00}
	\Pr \{ \underline{\bm{v}}  \leq \bm{v} \leq \overline{\bm{v}} \} \geq \alpha,
\end{equation}
which is equivalent as:
\begin{equation}\label{max0}
	\Pr \{ \underline{\bm{v}}  \leq \mathbf{R}\bm{p} + \mathbf{X}\bm{q} + \bm{\epsilon} \leq \overline{\bm{v}} \} \geq \alpha,
\end{equation}
where $\underline{\bm{v}} $ and $\overline{\bm{v}}$ are the voltage bounds. The value of $\alpha$ is a parameter that can be chosen to indicate the probability that event $\underline{\bm{v}}  \leq \mathbf{R}\bm{p} + \mathbf{X}\bm{q} + \bm{\epsilon} \leq \overline{\bm{v}}$ occurs. %

In this paper we only consider reactive power regulation and assume that the active load injection $\bm{p}$ is determined exogenously and the controllable variable is the reactive power injection $\bm{q}$. Denote $\Pr \{ \underline{\bm{v}}  \leq \mathbf{R}\bm{p} + \mathbf{X}\bm{q} + \bm{\epsilon} \leq \overline{\bm{v}} \}$ by $g(\bm{q})$, for a given tolerance level $\alpha$, the centralized voltage regulation problem is then captured as the following:
\begin{subequations}\label{prob:main}
	\begin{align}
		&  \min_{\bm{q}} \   C(\bm{q}) \label{eqn:main_obj}\\
		\mbox{s.t.}\ & g(\bm{q}) \geq \alpha, \label{eqn:alpha} 
	\end{align}
\end{subequations}
where the cost function $C(\bm{q})$ can be any convex cost function, for example $\Vert \bm{q} \Vert_2$ \cite{KekatosEtAl2015}. This cost function encourages small amount of reactive power support to maintain the acceptable voltage deviation due to uncertainty. In addition, $\bm{q}$ can be subject to box constraints due to limit of regulatory power.

Computing $g(\bm{q})$ requires computing an integral on the density of $\bm{\epsilon}$ on a multi-dimensional box, which does not generally have a close form. Thus it is hard to parameterize the geometry of the constraint $g(\bm{q}) \geq \alpha$. 
{However, its shape is reminiscent of a perfect ellipsoid that motivates us to approximate the region with an ellipsoid. Once the region is approximated, the optimization problem for voltage regulation becomes:
	\begin{subequations}\label{prob:convex}
		\begin{align}
			&  \min_{\bm{q}} \   C(\bm{q}) \\
			\mbox{s.t.}\ & f(\bm{q}) \leq 0 , 
		\end{align}
	\end{subequations}
	where $ f(\bm{q}) \leq 0$ represents the approximating ellipsoid. If $C(\bm{q})$ is convex, then \eqref{prob:convex} is a convex program and can be solved efficiently \cite{StephenEtAl2004}.
	
	Therefore, the central question in this paper is:  for a given tolerance level $\alpha$, \emph{how can one approximate the feasible region} $\{ \bm{q}, \mbox{s.t.}\  g(\bm{q}) \geq \alpha\}$ \emph{by an ellipsoid quickly and efficiently?} In Section \ref{sec:learning}, we discuss more details on how we answer this question by a SVM machine learning approach, and in Section \ref{sec:querying} we present an efficient training algorithm. 

\section{The learning problem}\label{sec:learning}

 Finding an approximate ellipsoid is non-trivial, since we do not know the distribution of the randomness in the system. Even if the distribution is known, the problem is still hard due to the multivariate integral computation. However, if we can empirically evaluate the feasibility and infeasibility of sufficiently many points in the space, eventually we obtain a reasonable estimate of the feasible region. This evaluation can be done by an \emph{oracle} defined as the following:
 
 \noindent\textbf{Oracle:} An oracle can efficiently check whether $g(\bm{q}) \geq \alpha$ (empirically) given a particular $\bm{q}$. If $g(\bm{q}) \geq \alpha$, the oracle returns a label $-1$, otherwise a label $1$. We say that the point $\bm{q}$ is \textit{queried}, when the oracle checks whether the constraint holds for a specific $\bm{q}$. The query process is illustrated in Fig.~\ref{oracle}.
 \begin{figure}[!ht]
 	\centering
 	\includegraphics[width=0.65\columnwidth]{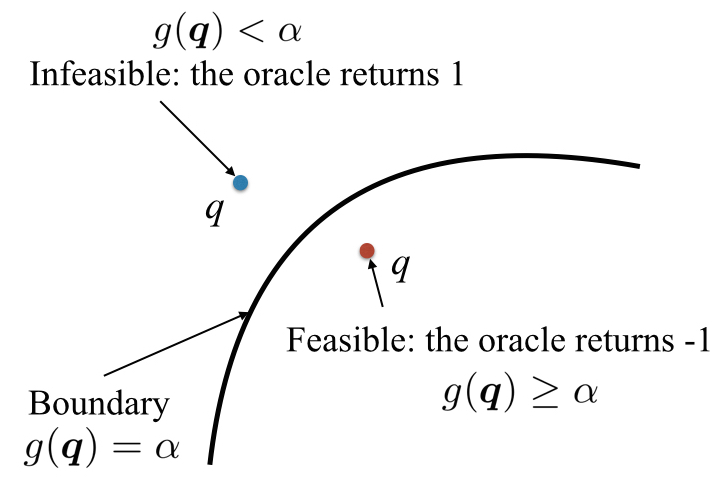}
 	\caption{A point is queried and oracle returns a label indicating whether this point is feasible or not.}
 	\label{oracle}
 \end{figure}
 
 Given that such an oracle exists, we can efficiently query as many $\bm{q}$'s as possible and hence estimate this boundary accurately. Since a perfect oracle that tests whether exactly $g(\bm{q}) \geq \alpha$ holds requires a multi-dimensional integral, and hence it is computationally challenging, we resort to an approximate test to verify whether $g(\bm{q}) \geq \alpha$. The approximate oracle then tests whether:
 \begin{equation}\label{eqn_sample_approx_g}
 	\begin{aligned}
 		& \hat{g}(\bm{q}, \{\bm{\epsilon}^{(s)}, \forall s \in \mathcal{S}\}) \\
 		\overset{\Delta}{=}  &   \frac{\sum_{s = 1}^{S} \mathds{1}\{ \underline{\bm{v}} \leq \mathbf{R}\bm{p} + \mathbf{X}\bm{q} + \bm{\epsilon}^{(s)} \leq \overline{\bm{v}}\}}{S}\\
 	\end{aligned}
 \end{equation}
 is greater than $\alpha$, given that $\bm{\epsilon}^{(s)}$ are i.i.d. samples of $\bm{\epsilon}$. In practice, $\bm{\epsilon}^{(s)}$ can be historical observations of the randomness $\bm{\epsilon}. $Here $ \mathds{1}\{\cdot\}$ is the indicator function and \eqref{eqn_sample_approx_g} is an approximate evaluation of the target chance constraint \eqref{eqn:alpha}.
 
 
 Let us define $f(\bm{q})\triangleq\bm{q}^{\top}\mathbf{M}\bm{q} + \bm{h}^{\top}\bm{q} + 1$. An ellipsoid can always be represented as $f(\bm{q})\leq 0$, where $\mathbf{M}\succeq 0$. Mathematically, we want to approximate the unknown geometry $\{ \bm{q}\,\vert\,g(\bm{q}) \geq \alpha\}$ by an ellipsoid denoted by $\{ \bm{q}\,\vert\, f(\bm{q})\leq 0, \mathbf{M}\succeq 0\}$ using the oracle.
 Let subscript $n$ denote the index of query.  If we have queried $N$ points, i.e., $\bm{q}_n, \forall n \in \{1,2, \cdots, N\}$, to estimate $\mathbf{M}, \bm{h}$, we solve the following optimization problem:
 \begin{equation}\label{eq:svm}
 	\begin{aligned}
 		\textrm{min}_{\mathbf{M}\succeq 0, \bm{h}} &\ \sum_{n=1}^{N} \nu_n \max\{0, -f(\bm{q}_n, \mathbf{M}, \bm{h})y_n\}
 	\end{aligned}
 \end{equation}
 where $y_n$ is the label and $y_n = -1$ if the oracle estimates it is feasible, and $y_n = 1$ otherwise. This formulation is similar to the renowned SVM, except that the loss term is not the hinge loss, i.e., $\max\{0, 1  -f(\bm{q}_n, \mathbf{M}, \bm{h})y_n\}$, but $\max\{0,  -f(\bm{q}_n, \mathbf{M}, \bm{h})y_n\}$. We adopt this loss term because if $f(\bm{q}_n, \mathbf{M}, \bm{h})$ has the same sign as $y_n$, then the loss term is zero.
 
 When the boundary is truly an ellipsoid, the optimization returns an ellipsoid such that no false positive nor false negatives occurs. Otherwise, we tune the weights $\nu_n$'s such that the value is high for false positives in the objective in \eqref{eq:svm}. This minimizes the occurrence of false positives, and the optimization problem approaches the maximum volume ellipsoid \cite{StephenEtAl2004} as number of queried points goes to infinity. In this way, we obtain a feasible solution to \eqref{prob:main} by solving \eqref{eq:svm}. In the next section, we discuss how to efficiently query a small number of $\bm{q}_n $'s to find such an ellipsoid with high accuracy.

\section{Querying procedure}\label{sec:querying}

To illustrate our proposed algorithm, in this section we assume that the  ground truth boundary is an ellipsoid, and we show that only logarithmically many points are needed to guarantee a small error in parameter estimation. More specifically, in Section~\ref{subsec:random_sampling} we present the difference between two sampling procedures for training the model in~\eqref{eq:svm}. In Section~\ref{subsec:binary} we propose an efficient algorithm to sample the necessary points and in Section~\ref{subsec:theorem} we illustrate our main result on the performance of the proposed algorithm. We show in Section~\ref{sec:sim} that the proposed approach also performs well for shapes with non-ellipsoidal boundaries.

\subsection{Random sampling and active sampling}\label{subsec:random_sampling}
A naive way to query points in order to estimate $\mathbf{M}, \bm{h}$ from \eqref{eq:svm} is to query a random number of points. However, this approach suffers from two shortcomings: 1). Random sampling queries points in the whole space equally probably. However, points that are further away from the boundary do not contribute to model change as much as the points close to the boundary. 2). If the whole search space is large and the true feasible region is small, the chance that we query a feasible point is relatively low, which does not yield accurate ellipsoidal estimation.

On the other hand, if we query \emph{actively} the sample points that are close enough to the boundary (and diversely located around the boundary), then with the same number of queries that random sampling uses, we should get a better understanding of what the boundary looks like. This is illustrated in Fig. \ref{fig:random_active}.

\begin{figure}
	\begin{subfigure}[b]{0.5\columnwidth}
		\includegraphics[width=0.8\linewidth]{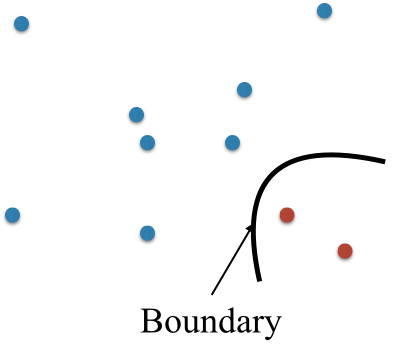}
		\caption{Random sampling.}
	\end{subfigure}%
	\begin{subfigure}[b]{0.5\columnwidth}
		\includegraphics[width=0.8\linewidth]{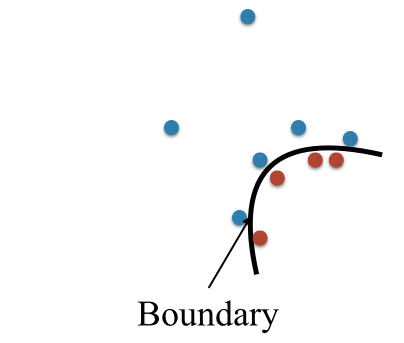}
		\caption{Active sampling.}
	\end{subfigure}
	\caption{Comparison between random sampling and active sampling, where blue dots are queried infeasible points, and red dots are queried feasible points.}
	\label{fig:random_active}
\end{figure}

Now the question becomes: how to query the points such that they are close enough to the boundary while maintaining sufficient diversity?
If the points are not diverse enough, for example the queried points are concentrated inside a small region within the feasible set, we end up with introducing too much bias into the learning model.
In Section~\ref{subsec:binary}, we present the algorithm to query points such that they are located near the boundary and discuss an easy-to-implement approach to ensure diversity.


\subsection{Binary search}\label{subsec:binary}
We propose a \emph{random binary search} algorithm in Algo.~\ref{algo:active_learning} that queries points with binary cuts along many randomly directions. Overall, the binary cut procedure efficiently locates points close to the boundary with the smallest amount of queries; diversity in queried points is maintained through querying many random directions. In Algo.~\ref{algo:active_learning}, subscript $n$ stands for each instance of queried point. In total, Algo.~\ref{algo:active_learning} requires at most $N_0$ oracle calls. 

\begin{algorithm}[h]
	\caption{Querying algorithm.}
	\label{algo:active_learning}
	\begin{algorithmic}[1]
		\State{\textbf{Input}: A feasible initial point $\bm{q}_0$, ${\tau}>0$, $i = 0$, $n = 1$, $N_0 > 0$.}
		\While{$n < N_0$}
		\State{$i = i + 1.$}
		\State{Sample a random direction $\bm{e}_i$. Initialize a large enough $\lambda>0$.}
		\State{Let a feasible point $ \bm{q}_i^{(-1)} = \bm{q}_0$. Let an infeasible point $\bm{q}_n  = \bm{q}_i^{(1)} = \bm{q}_i^{(-1)} + \lambda\bm{e}_i$. Query $\bm{q}_n$. Suppose that $\lambda$ is big enough such that the associated label $y_n = 1$.}
		\State{$n =  n + 1$.}
		\While{$\Vert \bm{q}_i^{(-1)} - \bm{q}_i^{(1)} \Vert_2 \geq \tau$}
		\State{Query $\bm{q}_n  = \frac{\bm{q}_i^{(-1)}  + \bm{q}_i^{(1)}}{2}$. Let associated label be $y_n$.}
		\State{ $n =  n + 1.$}
		\If{$y_n = 1$}
		\State $\bm{q}_i^{(1)} = \bm{q}_n$.
		\Else
		\State $\bm{q}_i^{(-1)} = \bm{q}_n$.
		\EndIf
		\EndWhile
		\EndWhile
	\end{algorithmic}
\end{algorithm}

We first explain the inner "while" loop for each fixed direction $\bm{e}_i$ from line 3 to line 13 in Algo. \ref{algo:active_learning}.
The inner loop guarantees that when it terminates, we have at least one feasible point $\bm{q}_i^{(-1)}$ and one infeasible point $\bm{q}_i^{(1)}$ along direction $\bm{e}_i$. Further, their distance from the true boundary is no larger than $\tau$ . If we can gather many pairs of such $\bm{q}_i^{(-1)}$'s and $\bm{q}_i^{(1)}$'s which are diverse enough, we should be able to obtain a good estimation of the true ellipsoid, illustrated by Fig. \ref{fig:random_active}. In Theorem \ref{main_theorem}, we state that to achieve a small error in parameter estimation, it suffices to query logarithmically such points.

To ensure maximum diversity in queried points, we generate sufficiently many \emph{random} directions $\bm{e}_i$'s and run Algo. \ref{algo:active_learning}. 
This is described by the outer "while" loop, which loops over many different random directions $\bm{e}_i$'s. Once the points are collected using Algo. \ref{algo:active_learning}, we can train the model in \eqref{eq:svm}. Considerations on estimation errors are provided in Section \ref{subsec:theorem}.

\subsection{Main result}\label{subsec:theorem}

Before presenting the main result in this paper, we first introduce some assumptions and notations used in this section.
\noindent\textbf{Assumption 1}: The ground truth boundary $\{ \bm{q}, {\mbox{s.t.}}\  g(\bm{q}) = \alpha\}$ is an ellipsoid described by $\{ \bm{q}, {\mbox{s.t.}}\ \bm{q}^{\top}\mathbf{M}^*\bm{q} + (\bm{h}^*)^{\top}\bm{q} + 1 = 0\}$.

\noindent\textbf{Notation 1}: Using subscript $n$ for each query, we query $N$ points, i.e., $\bm{q}_n, \forall n \in \{1,2, \cdots, N\}$.

\noindent\textbf{Notation 2}: Let us adopt notation $\text{vec}(A) \overset{\Delta}{=} [A_{11}, A_{12}, ..., A_{mn}], \text{when} \ A \in \mathcal{R}^{m \times n}$. Let $\bm{z}_n = [\bm{q}_n; \text{vec}(\bm{q}_n\bm{q}_n^{\top})]$, and $\mathbf{Z} = [\bm{z}^{\top}_1; \bm{z}^{\top}_2; \cdots, \bm{z}^{\top}_N]$.


We now present our main result on sampling complexity in Theorem \ref{main_theorem}.

\begin{theorem}\label{main_theorem}
	Let $\bm{q} \in \mathcal{R}^{d}$ and Assumption 1, Notation 1 and 2 hold. Let $\hat{\mathbf{M}}, \hat{\bm{\bm{h}}}$ be obtained by solving \eqref{eq:svm} using the queried samples according to Algo. \ref{algo:active_learning} with $\lambda_{\text{min}}(\mathbf{Z}^{T}\mathbf{Z})/ N$ bounded away from zero. Then 
	with $N > N_0 = O(d^2\log(\frac{1}{\tau}))$, the estimation error in parameters of the model in \eqref{eq:svm} is less than $\tau$ in 2-norm, i.e., $\Vert [\hat{\bm{h}}; \text{vec}(\hat{\mathbf{M}})]  - [{\bm{h}^*}; \text{vec}(\mathbf{M}^*)] \Vert_2 \leq \tau$, when $\tau$ is sufficiently small.
\end{theorem}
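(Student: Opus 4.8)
The plan is to exploit the fact that, although the constraint $f(\bm{q})\leq 0$ is quadratic in $\bm{q}$, it is \emph{linear} in the unknown parameters once we pass to the lifted feature $\bm{z}_n = [\bm{q}_n;\text{vec}(\bm{q}_n\bm{q}_n^{\top})]$ of Notation 2. Writing $\bm{\theta}=[\bm{h};\text{vec}(\mathbf{M})]$ and $\bm{\theta}^*=[\bm{h}^*;\text{vec}(\mathbf{M}^*)]$, we have $f(\bm{q}_n)=\bm{z}_n^{\top}\bm{\theta}+1$, so the target quantity is exactly $\Vert\hat{\bm{\theta}}-\bm{\theta}^*\Vert_2$. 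Under Assumption 1 the true ellipsoid labels every queried point correctly, hence it attains zero objective in \eqref{eq:svm}; therefore the minimizer $\hat{\bm{\theta}}$ also attains zero loss, i.e.\ it classifies all queried points correctly: $f(\bm{q}_i^{(-1)};\hat{\bm{\theta}})\leq 0$ and $f(\bm{q}_i^{(1)};\hat{\bm{\theta}})\geq 0$. This is the starting point.

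Next I would account for the sample budget. For a fixed random direction $\bm{e}_i$, feasibility along the ray from the interior point $\bm{q}_0$ is monotone because the ellipsoid is convex, so each halving of the inner loop strictly localizes the unique boundary crossing; starting from width $\lambda$ and stopping at width $\tau$ costs $O(\log(\lambda/\tau))=O(\log(1/\tau))$ oracle calls and returns a feasible/infeasible pair $\bm{q}_i^{(-1)},\bm{q}_i^{(1)}$ whose separation, and hence whose distance to the true boundary, is below $\tau$. To pin down the $d+d^2$ coordinates of $\bm{\theta}$ one needs the lifted design matrix $\mathbf{Z}$ to have full column rank with controlled conditioning; this is exactly what the hypothesis $\lambda_{\min}(\mathbf{Z}^{\top}\mathbf{Z})/N$ bounded away from zero supplies, which with random directions requires $O(d^2)$ of them. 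Multiplying the two counts yields the claimed budget $N_0=O(d^2\log(1/\tau))$.

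The error bound is then obtained row by row. For each direction, correct classification plus continuity forces the estimated surface $\{\hat{f}=0\}$ to cross the straddling interval at some $\hat{\bm{q}}_i$; since the interval has width below $\tau$ and also contains the true crossing $\bm{q}_i^{*}$ with $f(\bm{q}_i^{*};\bm{\theta}^*)=0$, we get $\Vert\hat{\bm{q}}_i-\bm{q}_i^{*}\Vert_2<\tau$. Letting $L$ be a Lipschitz constant for $f(\cdot;\bm{\theta}^*)$ on the bounded search region, this gives $|f(\hat{\bm{q}}_i;\bm{\theta}^*)|=|f(\hat{\bm{q}}_i;\bm{\theta}^*)-f(\bm{q}_i^{*};\bm{\theta}^*)|\leq L\tau$. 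Combining with $\bm{z}(\hat{\bm{q}}_i)^{\top}\hat{\bm{\theta}}=-1$, I obtain the two-sided estimate $|\bm{z}(\hat{\bm{q}}_i)^{\top}(\hat{\bm{\theta}}-\bm{\theta}^*)|=|f(\hat{\bm{q}}_i;\bm{\theta}^*)|\leq L\tau$. Stacking these rows into a matrix $\tilde{\mathbf{Z}}$, which lies within $\tau$ of $\mathbf{Z}$ and so inherits the conditioning hypothesis, gives $\Vert\tilde{\mathbf{Z}}(\hat{\bm{\theta}}-\bm{\theta}^*)\Vert_2\leq\sqrt{N}\,L\tau$, and inverting through $\lambda_{\min}(\tilde{\mathbf{Z}}^{\top}\tilde{\mathbf{Z}})\geq cN$ yields $\Vert\hat{\bm{\theta}}-\bm{\theta}^*\Vert_2\leq L\tau/\sqrt{c}=O(\tau)$, which falls below $\tau$ for $\tau$ small once the constant is absorbed.

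The main obstacle I anticipate is the one-sidedness of the zero-loss constraints: vanishing loss only certifies the \emph{sign} of $f$ at each queried point, and sign information alone leaves $\bm{\theta}$ undetermined, since the zero-loss set is an unbounded polytope. The device that rescues the argument is to examine not the queried points themselves but the induced zero-crossing $\hat{\bm{q}}_i$ of the \emph{learned} surface, and to transport it back through the \emph{true} function $f(\cdot;\bm{\theta}^*)$ via its Lipschitz bound; this manufactures the missing two-sided control $|\bm{z}^{\top}(\hat{\bm{\theta}}-\bm{\theta}^*)|\leq L\tau$. The second delicate point, namely that $O(d^2)$ random directions actually yield a well-conditioned lifted design (a random-matrix concentration statement, together with the transfer of conditioning from $\mathbf{Z}$ to $\tilde{\mathbf{Z}}$), is precisely what the theorem's hypothesis on $\lambda_{\min}(\mathbf{Z}^{\top}\mathbf{Z})/N$ conveniently absorbs.
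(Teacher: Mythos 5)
Your proposal is correct and follows essentially the same route as the paper's proof: lift to the feature vector $\bm{z}$ so the ellipsoid is linear in the parameters, observe that both the true and the learned ellipsoids must cross each $\tau$-wide straddling segment produced by the binary search, convert the resulting per-direction discrepancy into a bound of order $\tau$ on $|\bm{z}_i^{\top}(\hat{\bm{\theta}}-\bm{\theta}^*)|$ (the paper does this via $\Vert\Delta\bm{z}_i\Vert_{\infty}\leq c_0\tau$ and $\Vert\bm{w}_2\Vert_2\leq c_1$, which is exactly your Lipschitz argument in lifted coordinates), stack into a linear system, and invert using the $\lambda_{\min}(\mathbf{Z}^{\top}\mathbf{Z})/N$ condition, with the $O(\log(1/\tau))$ binary-search cost times $O(d^2)$ directions giving the budget. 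The only cosmetic difference is that the paper sharpens the residual norm from your $\sqrt{N}L\tau$ to $\sqrt{D}\,c_0c_1\tau$ via a projection-matrix lemma, so its final constant shrinks as $\sqrt{D/\tilde{N}}$ while yours does not, but both arguments conclude in the same way.
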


Proof of Theorem \ref{main_theorem} is in Appendix \ref{proof:main_theorem} for interested readers. From Theorem \ref{main_theorem}, we first observe that under the assumption that the true boundary is an ellipsoid, Algo. \ref{algo:active_learning} achieves a logarithmic bound in estimation error on the number of queries, as compared to the standard result that random sampling achieves a linear bound \cite{EhrenfeuchtEtAl1989,Hanneke2016}. This improvement in sampling complexity is due to the fact that Algo. \ref{algo:active_learning} \emph{actively} selects the best points to query instead of blindly querying random points. 
Second, to achieve the bound, the queried points have to be diverse, i.e., the smallest eigenvalue of the data covariance matrix is bounded away from zero. This enables us to fully explore the unknown feasible region, and reduces bias in the learning model.

\section{Simulation}\label{sec:sim}

In this section, we validate our approach using the IEEE standard test feeder. Here we use IEEE 13 bus feeder \cite{IEEE123busref}. The test feeder is shown in Fig. \ref{13bus}, where we assume bus 1 is the reference bus. More results based on synthetic data are left to Appendix \ref{subsec:synthetic} for interested readers.
\begin{figure}[!h]
	\centering
	\includegraphics[width=0.75\columnwidth]{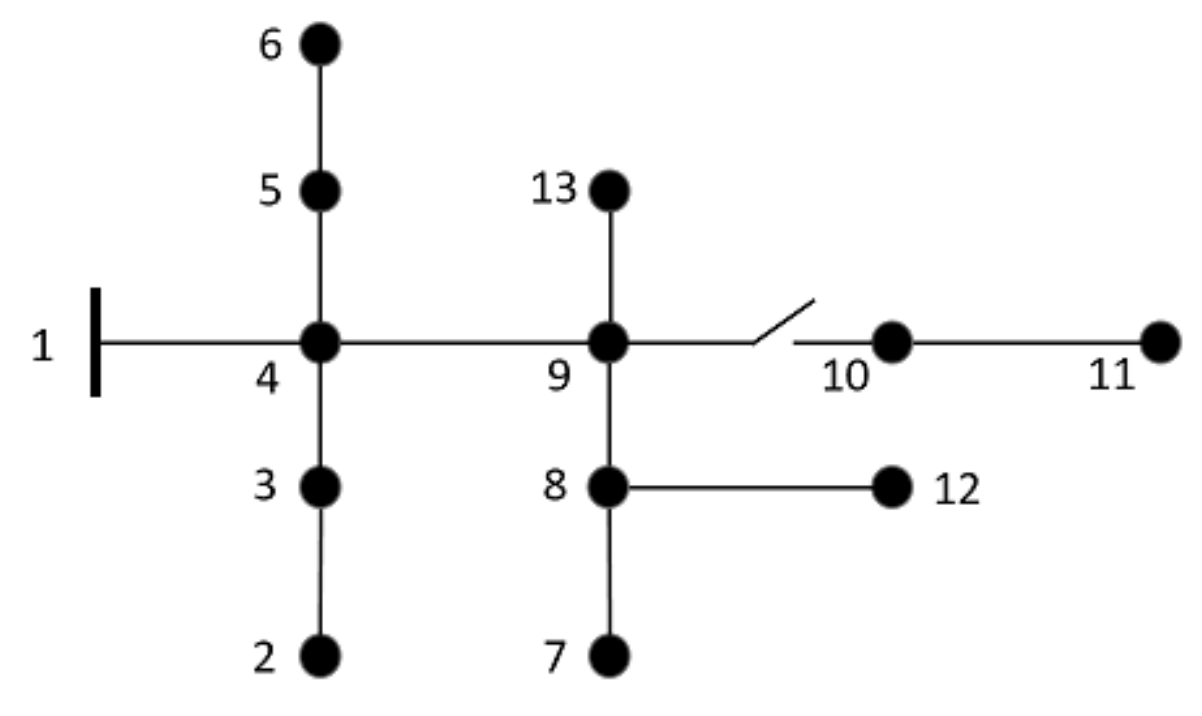}
	\caption{Schematic diagram of IEEE 13 bus test feeder.}
	\label{13bus}
\end{figure}

In this distribution test feeder, we assume that there is no distributed generation, so the active power injection on each bus is negative. The line impedance is retrieved from \cite{IEEE123busref}. In addition, we restrict the available reactive power regulation at the buses to be no more than 0.1 (p.u.). We assume that $\overline{\bm{v}} = 0.5$ and $\underline{\bm{v}} = -0.5$. The randomness $\bm{\epsilon}$ presented in this system is assumed to be a multivariate normal distribution with zero mean and high degrees of correlation, i.e., bus 2, 3, 9, 10, 12, 13 have non-trivial correlations. The distribution itself is unknown by the algorithm; however, historical observations of $\bm{\epsilon}$, i.e., $\bm{\epsilon}^{(s)}$'s are available.

Let us use the squared 2-norm as the cost of the optimization problem, i.e., $C(\bm{q}) = \frac{1}{2}\Vert \bm{q} \Vert_2^2$. The benchmark to solve the chance constrained problem in \eqref{prob:main} is the scenario approximation in \cite{LuedtkeEtAl2008}:
\begin{subequations}\label{prob:main_SAA}
	\begin{align}
		&  \min_{\bm{q}} \  \Vert \bm{q} \Vert_2^2 \label{eqn:main_obj_SAA}\\
		s.t.\ & \sum_{s \in \mathcal{S}} \mathds{1} \{\underline{\bm{v}} \leq \mathbf{R}\bm{p} + \mathbf{X}\bm{q} + \bm{\epsilon}^{(s)} \leq \overline{\bm{v}}\} \geq \alpha S \label{eqn:alpha_SAA}. 
	\end{align}
\end{subequations}

The indicator functions in this optimization problem can be replaced by ancillary binary variables and \eqref{prob:main_SAA} can be solved by mixed integer programming (MIP). Details of the MIP formulation are presented in \cite{LiEtAl2017}. The results comparing MIP and  the proposed ellipsoidal approximation are summarized in Table \ref{table:13bus}.

\begin{table}[!h]
	\renewcommand{\arraystretch}{1.3}
	\caption{Comparison between MIP in \eqref{prob:main_SAA} and Ellipsoidal approximation in in \eqref{prob:convex} for IEEE 13 bus.}
	\label{table:13bus}
	\centering
	\begin{tabular}{|p{3cm}|p{1.5cm}|p{3cm}|}
		\hline
		\bfseries   &\bfseries MIP  & \bfseries Ellipsoidal approximation \\
		\hline
		Running time (seconds)  & {$519$}  & $19$\\
		\hline
		Empirical risk level & $10.00\%$  & $9.20\%$\\
		\hline
		$\Vert\bm{q}\Vert_2$ (p.u.)  & $0.1321$ &  $0.1352$ \\
		\hline
	\end{tabular}
\end{table}

From Table \ref{table:13bus}, we observe that using ellipsoidal approximation significantly reduces computational time. It also yields a comparable result to that from MIP, with a slightly conservative risk level. In addition, MIP is inefficient in finding a sub optimal solution to \eqref{prob:main_SAA}, yielding a relative duality gap of 58.4 \% after 42 seconds, whereas our algorithm finds a sub optimal solution within just 19 seconds. This makes our algorithm more adaptable for real-time operation.

\section{Conclusion}\label{sec:conclusion}

We proposed a stochastic programming framework to solve voltage regulation problems. In order to approximate the feasible region of the chance constraint, we formulate the approximation procedure using a machine learning framework. We present an efficient active sampling algorithm that only queries points close to the boundary of the chance constraint. We find that this procedure queries logarithmically many points under the assumption that the true region is an ellipsoid. We extend this result to non-ellipsoidal regions using IEEE standard test feeders. Simulation results show that our proposed algorithm has significant better performance than standard approaches. In the future, we aim to extend the result to larger power distribution systems.

\bibliographystyle{IEEEtran}	
\bibliography{IEEEabrv,paperref}

\appendix

\subsection{Proof for Theorem \ref{main_theorem}}\label{proof:main_theorem}

\begin{proof}
	Note that according to the assumption, the boundary is an ellipsoid, i.e., $\bm{q}^{\top}\mathbf{M}\bm{q} + \bm{h}^{\top}\bm{q} + c = 0$ is the boundary, where $c = 1$ if the origin is inside the ellipsoid and $c = -1$ if the origin is outside the ellipsoid. WLOG, let us assume here that $c = 1$.
	
	Let us suppose that we have collected $\tilde{N}$ pairs of data points $\bm{q}_i^{(-1)}, \bm{q}_i^{(1)}$ (feasible and infeasible) that are within $\tau$ distance of each other. To obtain those points, we need in total $N \geq \tilde{N} O(\log(\frac{1}{\tau}))$ queries due to multiple binary cuts \cite{DasguptaEtAl2006}. In the following we obtain a bound on $\tilde{N}$.
	
	Let ($\mathbf{M}_1, \bm{h}_1, 1$) and ($\mathbf{M}_2, \bm{h}_2, 1$) be two ellipsoids that perfectly separate them. Then the boundary of those two hypothesis should satisfy the following: if
	\begin{equation}\label{eq:M1}
		(\bm{q}_i^{(0)})^{\top}\mathbf{M}_1\bm{q}_i^{(0)}+\bm{h}_1^{\top}\bm{q}_i^{(0)}+1 = 0
	\end{equation}
	where $\bm{q}_i^{(0)}$ is on the line segment between $\bm{q}_i^{(-1)}$ and $\bm{q}_i^{(1)}$, and \begin{equation}\label{eq:M2}
		(\bm{q}_i^{(0)}+\Delta \bm{q}_i)^{\top}\mathbf{M}_2(\bm{q}_i^{(0)}+\Delta \bm{q}_i)+\bm{h}_2^{\top}(\bm{q}_i^{(0)}+\Delta \bm{q}_i)+1 = 0
	\end{equation}
	where $\Delta \bm{q}_i$ is along the direction of line segment between $\bm{q}_i^{(-1)}$ and $\bm{q}_i^{(1)}$ and $\Vert \Delta \bm{q}_i \Vert_2 \leq \tau$. An illustration is shown in Fig. \ref{proof}.

	\begin{figure}[!h]
		\centering
		\includegraphics[width=0.65\columnwidth]{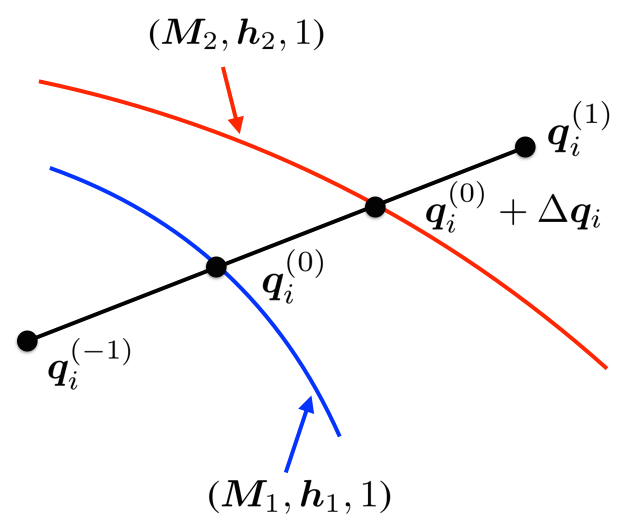}
		\caption{An illustration of \eqref{eq:M1} and \eqref{eq:M2}.}
		\label{proof}
	\end{figure}

	Let $\bm{w}_1$ and $\bm{w}_2$ compactly represent the ellipsoid and $\bm{z}$ denote $ [\bm{q}; \text{vec}(\bm{q}\bm{q}^{\top})]$. Then the decision boundary is compactly represent by $\bm{w}^{\top}\bm{z} + 1$, where $\bm{w} = [\bm{h}; \text{vec}(\mathbf{M})]$. And the above condition can be transformed as:
	\begin{equation}
		\bm{w}_1^{\top}\bm{z}_i^{(0)}+1 = 0,
	\end{equation}
	and 
	\begin{equation}
		\bm{w}_2^{\top}(\bm{z}_i^{(0)}+\Delta \bm{z}_i)+1 = 0,
	\end{equation}
	where $\Delta \bm{z}_i = [ \Delta\bm{q}_i; \text{vec}(2\Delta\bm{q}_i(\bm{q}_i^{(0)})^{\top} +\Delta\bm{q}_i\Delta \bm{q}_i^{\top})]$.
	
	Since $\Vert \Delta \bm{q}_i \Vert_2 \leq \tau$, we know that $\Vert \Delta \bm{z}_i \Vert_{\infty} \leq \max\{\tau, c_0\tau +  \tau^2\}$, where  $c_0 = 2 \max_i \Vert \bm{q}_i^{(0)}\Vert_{\infty}$. Assume that $\tau\ll c_0$, we simply have that $\Vert \Delta \bm{z}_i \Vert_{\infty} \leq c_0\tau$. 
	
	Subtracting one from another:
	\begin{equation}
		(\bm{w}_1-\bm{w}_2)^{\top}\bm{z}_i^{(0)} - \bm{w}_2^{\top}\Delta \bm{z}_i = 0.
	\end{equation}
	
	Let $\mathbf{Z} = [\bm{z}^{\top}_1;\bm{z}^{\top}_2;...;\bm{z}^{\top}_{\tilde{N}}]$, $\Delta \mathbf{Z} = [\Delta \bm{z}^{\top}_1; \Delta \bm{z}^{\top}_2;...;\Delta \bm{z}^{\top}_{\tilde{N}}] $ and $\bm{w}_1 -\bm{w}_2 = \Delta \bm{w}$. Therefore we have a linear system such that:
	\begin{equation}
		\mathbf{Z} (\Delta \bm{w}) = (\Delta \mathbf{Z})\bm{w}_2. 
	\end{equation}
	
	Let $(\mathbf{Z}^{\dagger})\mathbf{Z} = I$, then 
	\begin{equation}
		\Delta \bm{w} = \mathbf{Z}^{\dagger}(\Delta \mathbf{Z})\bm{w}_2 .
	\end{equation}

	Since the intercept is constrained to be $1$, we know that $\Vert \bm{w}_1 \Vert_2,  \Vert \bm{w}_2 \Vert_2 \leq c_1$ where $c_1$ is a constant, otherwise the ellipsoid is not well defined.

	{Let us bound the above as the following: 
		\begin{equation}
			\begin{aligned}
				\Vert \Delta \bm{w}\Vert_2 &= \Vert \mathbf{Z}^{\dagger}\mathbf{Z}(\mathbf{Z}^{\top}\mathbf{Z})^{-1}\mathbf{Z}^{\top}\Delta \mathbf{Z} \bm{w}_2 \Vert_2\\
				& \leq \Vert \mathbf{Z}^{\dagger}\Vert_2 \Vert \mathbf{Z}(\mathbf{Z}^{\top}\mathbf{Z})^{-1}\mathbf{Z}^{\top}\Delta \mathbf{Z} \bm{w}_2 \Vert_2\\
				& \overset{(a)}{\leq} \sigma_{min}(\mathbf{Z})^{-1}\Vert \mathbf{Z}(\mathbf{Z}^{\top}\mathbf{Z})^{-1}\mathbf{Z}^{\top}\Delta \mathbf{Z} \bm{w}_2 \Vert_2\\
				& \overset{(b)}{\leq} \sigma_{min}(\mathbf{Z})^{-1}\sqrt{D} \tau c_1c_0\\
				& \leq \sqrt{\lambda_{min}(\mathbf{Z}^{\top}\mathbf{Z})}^{-1} \sqrt{D} \tau c_1c_0\\
				& \leq \sqrt{\lambda_{min}(\mathbf{Z}^{\top}\mathbf{Z}/\tilde{N})}^{-1} \sqrt{D/\tilde{N}} \tau c_1c_0\\
			\end{aligned}
		\end{equation}
		where $(a)$ is based on the fact that $\Vert \mathbf{Z}^{\dagger} \Vert_2 = \sigma_{max}(\mathbf{Z}^{\dagger}) = \sigma_{min}(\mathbf{Z})^{-1}$. $(b)$ is based on the fact that $\Vert \Delta \mathbf{Z} \bm{w}_2 \Vert_{\infty} \leq c_1c_0 \tau$ because $\Vert \Delta \bm{z}_i \Vert_{\infty} \leq c_0 \tau$ and $\Vert \bm{w}_2 \Vert_2 \leq c_1$, and that $\mathbf{Z}(\mathbf{Z}^{\top}\mathbf{Z})^{-1}\mathbf{Z}^{\top}$ is a projection matrix with rank $D$ according to Lemma \ref{lemma1} in Appendix \ref{appendix:lemma}.
		
		Given that $c_2 = \lambda_{min}(\mathbf{Z}^{\top}\mathbf{Z}/\tilde{N})$ is bounded away from zero because the sampled points are diverse, we know that in order to have $\Vert \Delta w\Vert_2 \leq \tau$, we need $\tilde{N} \geq \frac{c_1^2c_0^2}{c_2^2} D$. Using the fact that $D = d^2 + d + 1 = O(d^2)$, this indicates that $N \geq \frac{c_1^2c_0^2}{c_2^2} O(d^2) O(\log(\frac{1}{\tau}))  = O(d^2\log
		(\frac{1}{\tau}))$.		
		
		Last, using the fact that $(\hat{\mathbf{M}}, \hat{\bm{h}},1)$ obtained by solving \eqref{eq:svm} perfectly separates the feasible and infeasible points when $(\mathbf{M}^*, \bm{h}^*, 1)$ is truly an ellipsoid, we let the former be represented by ($\mathbf{M}_1, \bm{h}_1, 1$) and the latter be ($\mathbf{M}_2, \bm{h}_2, 1$), this concludes the final proof.
	}
\end{proof}

\subsection{Lemma \ref{lemma1} and its proof}\label{appendix:lemma}
\begin{lemma}\label{lemma1}
	If $P_{\mathbf{Z}} = \mathbf{Z}(\mathbf{Z}^{\top}\mathbf{Z})^{-1}\mathbf{Z}^{\top}$ is a projection matrix where $\mathbf{Z} \in \mathcal{R}^{N\times D}$, where $N > D$, then if a vector $\bm{x} \in \mathcal{R}^{N}$ such that $\Vert \bm{x} \Vert_{\infty} \leq c$, we have that $\Vert P_{\mathbf{Z}} \bm{x} \Vert_2 \leq c\sqrt{D}$.
\end{lemma}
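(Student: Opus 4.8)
The plan is to reduce the statement to a pure dimension-counting inequality by exploiting the spectral structure of an orthogonal projection. First I would record that $P_{\mathbf{Z}}=\mathbf{Z}(\mathbf{Z}^{\top}\mathbf{Z})^{-1}\mathbf{Z}^{\top}$ is symmetric and idempotent, i.e. $P_{\mathbf{Z}}^{\top}=P_{\mathbf{Z}}$ and $P_{\mathbf{Z}}^{2}=P_{\mathbf{Z}}$, so it is the orthogonal projector onto the column space of $\mathbf{Z}$. Since $N>D$ and $\mathbf{Z}$ has full column rank, this column space has dimension $D$, and hence $P_{\mathbf{Z}}$ has eigenvalue $1$ with multiplicity $D$ and eigenvalue $0$ with multiplicity $N-D$. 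Writing $P_{\mathbf{Z}}=\mathbf{U}\mathbf{U}^{\top}$ with $\mathbf{U}\in\mathcal{R}^{N\times D}$ having orthonormal columns $\bm{u}_1,\dots,\bm{u}_D$ packages all of this structure compactly; this is also the step that makes precise the rank-$D$ claim invoked in the main proof.

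Next I would use this factorization to collapse the $2$-norm onto the $D$-dimensional coefficient space. Because $\mathbf{U}$ has orthonormal columns, $\Vert P_{\mathbf{Z}}\bm{x}\Vert_2=\Vert \mathbf{U}\mathbf{U}^{\top}\bm{x}\Vert_2=\Vert \mathbf{U}^{\top}\bm{x}\Vert_2$, so it suffices to bound the length-$D$ vector $\bm{y}=\mathbf{U}^{\top}\bm{x}$ with entries $y_k=\bm{u}_k^{\top}\bm{x}$. On a $D$-dimensional vector the elementary comparison $\Vert\bm{y}\Vert_2\le\sqrt{D}\,\Vert\bm{y}\Vert_\infty$ holds, so the target bound $c\sqrt{D}$ follows the moment one shows $\Vert\bm{y}\Vert_\infty=\max_k|\bm{u}_k^{\top}\bm{x}|\le c$.

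The hard part will be exactly this last per-coordinate estimate: controlling each $|\bm{u}_k^{\top}\bm{x}|$ by $\Vert\bm{x}\Vert_\infty=c$ rather than by $\Vert\bm{x}\Vert_2$. The naive Cauchy--Schwarz bound $|\bm{u}_k^{\top}\bm{x}|\le\Vert\bm{u}_k\Vert_2\,\Vert\bm{x}\Vert_2=\Vert\bm{x}\Vert_2$ only yields the loose factor $\sqrt{N}$, while the $\ell_1$--$\ell_\infty$ bound $|\bm{u}_k^{\top}\bm{x}|\le\Vert\bm{u}_k\Vert_1\,c$ is sharp only when the projection directions are well localized (for a genuine coordinate projector $\Vert\bm{u}_k\Vert_1=1$ and the estimate is exact). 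I would therefore concentrate the entire argument here, either through the leverage-score identity $\sum_i(P_{\mathbf{Z}})_{ii}=\mathrm{tr}(P_{\mathbf{Z}})=D$ combined with $(P_{\mathbf{Z}})_{ii}=\sum_j (P_{\mathbf{Z}})_{ij}^2$ to control the aggregate row mass, or by appealing to the specific structure of the vector $\bm{x}=\Delta\mathbf{Z}\bm{w}_2$ to which the lemma is applied in the main proof. Making this step rigorous---and in particular settling whether the clean $\sqrt{D}$ factor survives for a general projector or only under the localization available in our setting---is the crux; everything preceding it is routine linear algebra.
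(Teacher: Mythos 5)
You have correctly located the crux, and you should know that the step you could not close cannot be closed: the lemma as stated is false for a general orthogonal projector. Take $D=1$, let $\mathbf{Z}=\bm{1}\in\mathcal{R}^{N\times 1}$ be the all-ones vector, so that $P_{\mathbf{Z}}=\tfrac{1}{N}\bm{1}\bm{1}^{\top}$, and take $\bm{x}=c\bm{1}$. Then $\Vert\bm{x}\Vert_{\infty}=c$ but $P_{\mathbf{Z}}\bm{x}=c\bm{1}$, so $\Vert P_{\mathbf{Z}}\bm{x}\Vert_2=c\sqrt{N}$, exceeding the claimed bound $c\sqrt{D}=c$ by a factor of $\sqrt{N}$. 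In your notation this is exactly the failure mode you feared: $\bm{y}=\mathbf{U}^{\top}\bm{x}$ has $\Vert\bm{y}\Vert_{\infty}=c\sqrt{N}$, not $c$, because the per-coordinate estimate $|\bm{u}_k^{\top}\bm{x}|\le c$ fails when the projection directions are delocalized. Everything preceding that point in your argument (symmetry, idempotence, $\Vert P_{\mathbf{Z}}\bm{x}\Vert_2=\Vert\mathbf{U}^{\top}\bm{x}\Vert_2$, and $\Vert\bm{y}\Vert_2\le\sqrt{D}\,\Vert\bm{y}\Vert_{\infty}$) is correct; in general only $\Vert\mathbf{U}^{\top}\bm{x}\Vert_{\infty}\le\Vert\bm{x}\Vert_2\le c\sqrt{N}$ is available, which recovers nothing beyond the trivial bound.

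For comparison, the paper's own proof follows your route up to the spectral decomposition and then asserts that since $\Vert\bm{x}\Vert_{\infty}\le c$ and $\mathbf{U}$ is orthonormal, $\Vert\mathbf{U}\bm{x}\Vert_{\infty}\le c$. That is precisely the step you declined to take, and it is wrong: orthogonal matrices preserve the $2$-norm, not the $\infty$-norm (the counterexample above is the extremal case). So your hesitation is a correct diagnosis of an error in the paper, not a gap in your own reasoning. Of your two proposed repairs, the leverage-score identity $\mathrm{tr}(P_{\mathbf{Z}})=D$ alone cannot work (the counterexample satisfies it); the viable path is to use the structure of the vector the lemma is applied to in Theorem~\ref{main_theorem}, namely $\bm{x}=\Delta\mathbf{Z}\bm{w}_2$ with $\tilde{N}$ rows, and bound $\Vert P_{\mathbf{Z}}\Delta\mathbf{Z}\bm{w}_2\Vert_2\le\Vert\Delta\mathbf{Z}\bm{w}_2\Vert_2\le\sqrt{\tilde{N}}\,\Vert\Delta\mathbf{Z}\bm{w}_2\Vert_{\infty}\le\sqrt{\tilde{N}}\,c_0c_1\tau$ directly. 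Combined with $\Vert\mathbf{Z}^{\dagger}\Vert_2=\sigma_{\min}(\mathbf{Z})^{-1}$ this gives $\Vert\Delta\bm{w}\Vert_2\le\lambda_{\min}(\mathbf{Z}^{\top}\mathbf{Z}/\tilde{N})^{-1/2}c_0c_1\tau$, which is dimension-free in $D$; the $O(d^2)$ factor in the sample-complexity statement then has to be re-derived from the separate requirement that $\mathbf{Z}$ have full column rank (i.e.\ $\tilde{N}\ge D$) rather than from the chain of inequalities as written.
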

\begin{proof}
	Note that because $P_{\mathbf{Z}} P_{\mathbf{Z}}  = P_{\mathbf{Z}} $, it suffices to show that $\bm{x}^{\top} P_{\mathbf{Z}} \bm{x} \leq c^2 {D}$.
	We write:
	\begin{equation}
		\begin{aligned}
			\bm{x}^{\top} P_{\mathbf{Z}} \bm{x}  &  = \bm{x}^{\top} \mathbf{U}^{\top}\bm{\Sigma U} \bm{x} \\
			&  = (\mathbf{U}\bm{x})^{\top}\bm{\Sigma} \mathbf{U} \bm{x} \\ 
		\end{aligned}
	\end{equation}
	where $\mathbf{U}$ is a orthonormal matrix, and $\bm{\Sigma}$ is a diagonal matrix with first $D$ diagonal elements equals to one and the rest zero.
	
	Since $\Vert \bm{x} \Vert_{\infty} \leq c$ and $\mathbf{U}$ is a orthonormal matrix, we have $\Vert \mathbf{U} \bm{x} \Vert_{\infty} \leq c$. Along with the fact that the diagonal matrix $\bm{\Sigma}$ only has $D$ non zero elements, this indicates that $(\bm{Ux})^{\top}\bm{\Sigma U x}\leq c^2 D$, which concludes the proof.
	
\end{proof}

\subsection{Synthetic examples}\label{subsec:synthetic}
We generate 2-dimensional toy examples to compare the proposed active learning procedure with a standard random sampling procedure. We fix the number of queries to be the same in both procedures. To test the performance, we use the following two geometries (an ellipse and a square) as the ground-truth convex set:
\begin{subequations}
	\begin{align}
		\label{ellipse}
		& \{ \bm{x}\,\vert\,\Vert \mathbf{A}\bm{x} - \bm{b} \Vert_2 \leq 1 \}, \\
		\label{square}
		& \{ \bm{x}\,\vert\,-1 \leq \bm{x} \leq 3\}.
	\end{align}
\end{subequations}

\begin{figure}[H]
	\begin{subfigure}[b]{0.5\textwidth}
		\includegraphics[width=1\linewidth]{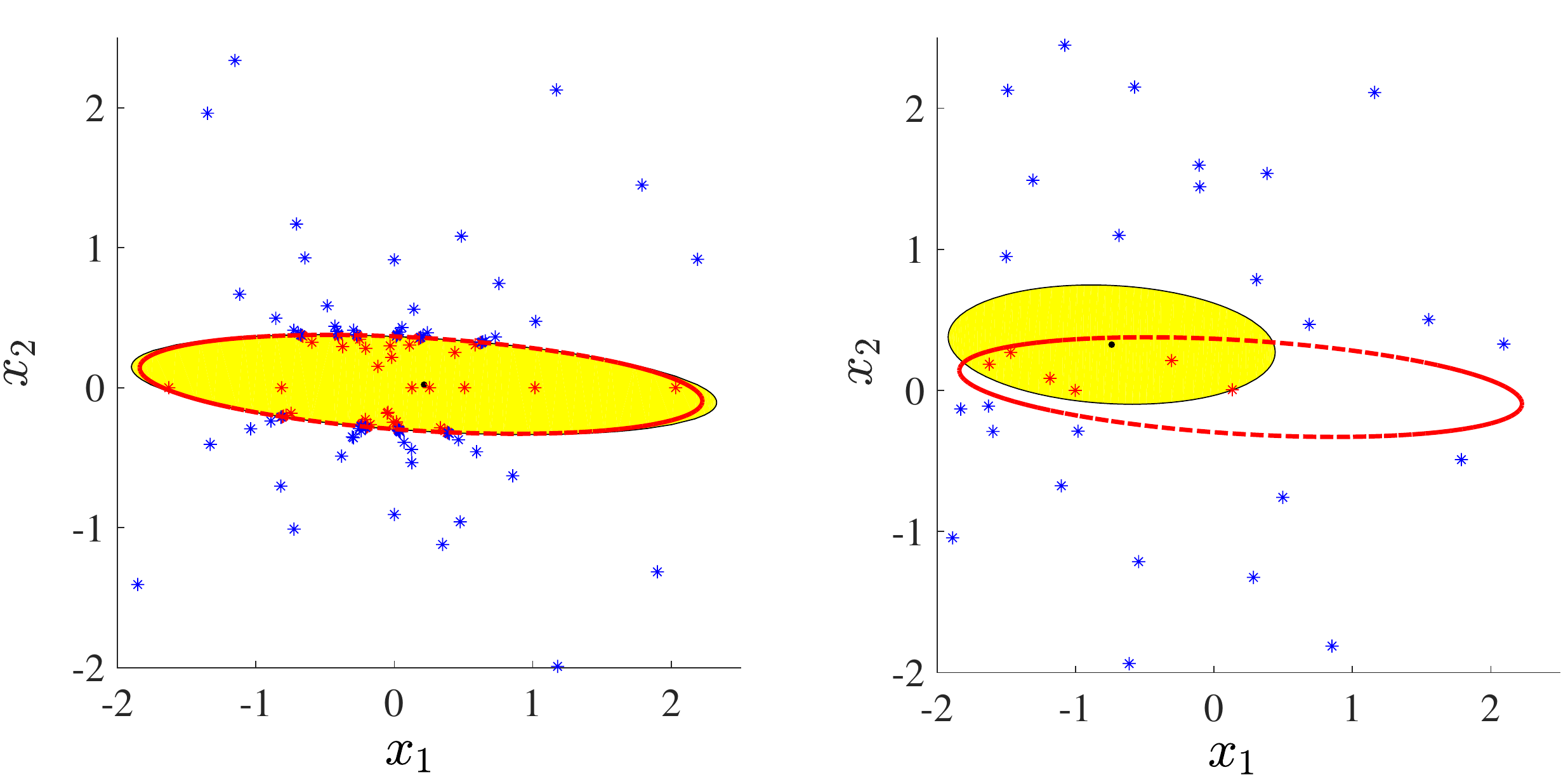}
		\caption{When ground truth is an ellipse according to \eqref{ellipse}. }
		\label{fig:ellipse} 
	\end{subfigure}
	
	\begin{subfigure}[b]{0.5\textwidth}
		\includegraphics[width=1\linewidth]{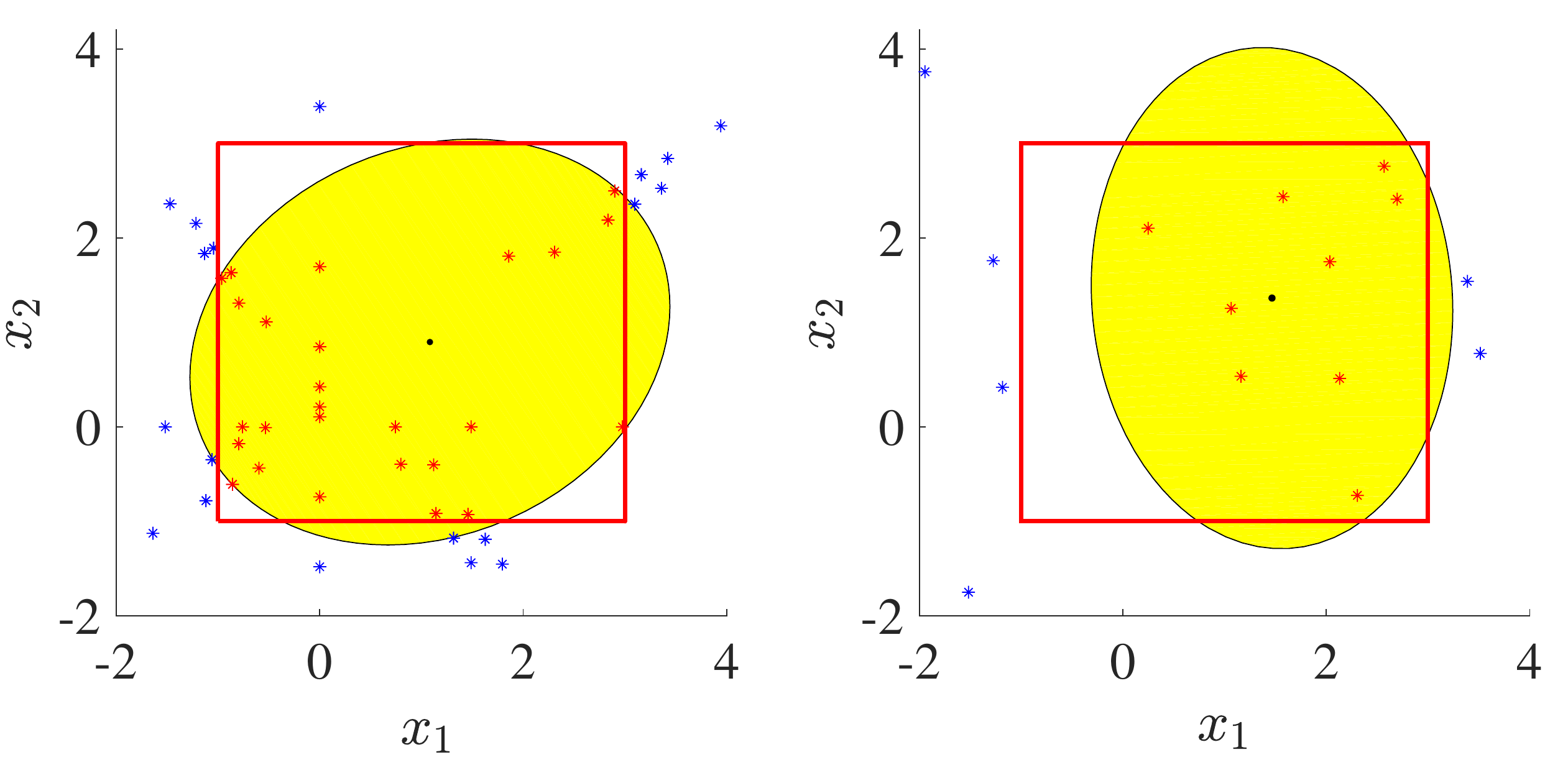}
		\caption{When ground truth is a square according to \eqref{square}.}
		\label{fig:square}
	\end{subfigure}
	
	\caption{Comparison between active sampling and random sampling. The red dots are queried feasible points and the blue dots are queried infeasible points. The ground truth boundary is represented by red solid lines. The fitted ellipse is represented by yellow region. Left: active sampling, right: random sampling.}
\end{figure}

As can be seen from Fig. \ref{fig:ellipse}, active sampling yields a more accurate ellipsoidal approximation as opposed to random sampling. Even with the case where the underlying compact set is not an ellipsoid, active sampling still achieves a much better ellipsoidal approximation than that of random sampling, as shown in Fig. \ref{fig:square}.

\end{document}